\newtheorem{theorem}{Theorem}
\theoremstyle{definition}
\definecolor{backcolour}{rgb}{0.63, 0.79, 0.95}
\lstdefinestyle{mystyle}{
  backgroundcolor=\color{backcolour},
  basicstyle=\ttfamily\footnotesize,
  breakatwhitespace=false,         
  breaklines=true,                 
  captionpos=b,                    
  keepspaces=true,                 
  numbers=left,                    
  numbersep=5pt,                  
  showspaces=false,                
  showstringspaces=false,
  showtabs=false,                  
  tabsize=2
}
\begin{document}
\author{Mark Whitmeyer\thanks{Arizona State University. \href{mailto:mark.whitmeyer@gmail.com}{mark.whitmeyer@gmail.com}. Andy Kleiner, Joseph Whitmeyer, and Kun Zhang gave me useful feedback, like, e.g., ``give the reader a more detailed proof.''}}

\title{A More Informed Sender Benefits the Receiver When the Sender Has Transparent Motives}
\date{}
\maketitle

A sender (\(S\)) with state-independent preferences (i.e., \textbf{transparent motives}, \cite{transmotives}) privately observes a signal, \(\pi\), about the state of the world, before sending a message to a receiver (\(R\)), who subsequently takes an action. Regardless of whether \(R\) can \textbf{mediate}--and commit to a garbling of \(S\)'s message--or \textbf{delegate}--commit to a stochastic decision rule as a function of \(S\)'s message--and understanding the statement ``\(R\) is better off as a result of an improvement of \(S\)'s information from \(\pi\) to \(\pi'\),'' where \(\pi\) is a garbling of \(\pi'\), to mean that her maximal and minimal equilibrium payoffs (weakly) increase,
\begin{theorem}
If \(S\) is more informed, \(R\) is better off.
\end{theorem}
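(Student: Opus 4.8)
The plan is to work in belief space and to exploit the defining feature of transparent motives: since \(u_S\) does not depend on \(\omega\), every sender type ranks lotteries over \(R\)'s actions identically. I would first record the resulting structural fact. In any equilibrium of either game all sender types obtain a common value \(v^{*}\) and pool on a common set of \(v^{*}\)-optimal on-path messages, differing only in how they randomize over that set; consequently \(R\)'s equilibrium information is always a mean-preserving contraction (a Blackwell garbling) of \(S\)'s. Represent \(\pi\) and \(\pi'\) by their posterior distributions \(\tau\) and \(\tau'\). That \(\pi\) is a garbling of \(\pi'\) means \(\tau \preceq_{cx} \tau'\) in the convex order, witnessed by a kernel \(G\) that degrades \(\pi'\)-realizations to \(\pi\)-realizations; transitivity of \(\preceq_{cx}\) then gives \(\{\rho:\rho\preceq_{cx}\tau\}\subseteq\{\rho:\rho\preceq_{cx}\tau'\}\), i.e.\ \(R\)'s feasible information targets expand as \(S\) becomes more informed.

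Next I would reduce each game to a single choice by \(R\). In both the mediation and delegation games \(R\) commits first, so an equilibrium is a pair \((d,s)\) with \(s\in BR(d)\) and \(d\) optimal against the anticipated response; the transparent-motives indifference makes the sender's selection within \(BR(d)\) the sole source of multiplicity. Hence the maximal and minimal receiver payoffs are \(\overline R(\tau)=\sup_{d}\sup_{s\in BR(d)}R(d,s)\) and \(\underline R(\tau)=\sup_{d}\inf_{s\in BR(d)}R(d,s)\), and I would invoke the earlier characterization to write each as an optimization over receiver information structures \(\rho\preceq_{cx}\tau\) subject to the sender-value constraint that all on-path items share value \(v^{*}\).

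For the maximal payoff the argument is a simulation that uses \(R\)'s commitment. Given any device \(d\) optimal under \(\tau\), with value \(v^{*}\) and induced action-assignment \(\beta\), let \(R\) under \(\tau'\) commit to the degraded device \(\tilde d = d\circ G\): on a truthful report of the \(\pi'\)-type \(p'\), draw \(p\sim G(\cdot\mid p')\) and play \(\beta(p)\). Because every on-path \(\pi\)-type carries sender value exactly \(v^{*}\), the composite yields value \(v^{*}\) for every report, so truthful reporting lies in \(BR(\tilde d)\); and since \(G\tau'=\tau\), truthful play reproduces the joint law of (action, state), hence \(R\)'s payoff. Thus the favorable continuation of \(\tilde d\) delivers at least \(\overline R(\tau)\), giving \(\overline R(\tau')\ge\overline R(\tau)\).

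The main obstacle is the minimal payoff, where this simulation fails: under \(\tilde d\) the indifferent sender may misreport, and because it now observes the finer \(p'\) it can tailor the misreport to the state and harm \(R\) \emph{more} than any \(\pi\)-type could, so degradation cannot protect \(R\). Monotonicity of \(\underline R\) must instead come from \(R\)'s freedom to choose a better device under \(\tau'\), and the crux is showing this design advantage dominates \(S\)'s enhanced ability to manipulate under adversarial tie-breaking. Concretely, I would show that for every device \(d\) witnessing \(\underline R(\tau)\) there is a device \(d'\) under \(\tau'\) with \(\inf_{s\in BR(d')}R(d',s)\ge\inf_{s\in BR(d)}R(d,s)\), built by (i) targeting a contraction \(\rho'\preceq_{cx}\tau'\) at least as informative for \(R\) as the \(\tau\)-optimal target \(\rho\) — feasible precisely because the contraction set expands — and (ii) using the binding constraint \(u_S\equiv v^{*}\) on the menu to pin the action--state correlation tightly enough that no adversarial assignment of the finer types drops \(R\) below \(\underline R(\tau)\). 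Taking \(\sup_{d'}\) then yields \(\underline R(\tau')\ge\underline R(\tau)\). Establishing step (ii), that the \(v^{*}\)-constraint limits harmful re-matching even under the finer type space, is where I expect the real work to lie.
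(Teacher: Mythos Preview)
Your third paragraph—the simulation via the garbling \(\rho\) (your \(G\))—is the paper's entire proof. The paper's argument is one sentence: since transparent motives force \(S\) to be indifferent across on-path messages, a sender endowed with \(\pi'\) is willing, in equilibrium, to mix so as to first apply \(\rho\) (with \(\pi=\rho\circ\pi'\)) and then play the \(\pi\)-equilibrium messaging; this replicates any \(\pi\)-equilibrium outcome under \(\pi'\). The paper does not set up the belief-space machinery, does not split into maximal and minimal cases, and does not introduce a separate device \(\tilde d=d\circ G\) on \(R\)'s side; the garbling is carried by \(S\)'s willingness to mix, not by \(R\)'s mechanism.

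Where you diverge from the paper is the minimal-payoff case. The replication argument literally yields \(\{\text{\(\pi\)-equilibrium \(R\)-payoffs}\}\subseteq\{\text{\(\pi'\)-equilibrium \(R\)-payoffs}\}\), which delivers \(\overline R(\pi')\ge\overline R(\pi)\) but, at face value, pushes the infimum the wrong way. You correctly notice that for a \emph{fixed} device the adversarial sender with finer information can tailor misreports more harmfully, so \(\inf_{s\in BR(d)}R(d,s)\) is weakly lower under \(\pi'\). The paper's proof does not engage with this; it simply asserts the same indifference/garbling logic covers both extremal payoffs. So either the paper is relying on an implicit step—e.g., that \(R\)'s commitment lets her select among the replicated equilibria, or that the \(\pi\)-minimal device can without loss be taken so that \(R\)'s payoff is invariant to \(S\)'s on-path mixing—or the min claim is left under-argued. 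Your proposed fix (target a finer contraction \(\rho'\) and use the \(u_S\equiv v^{*}\) constraint to bound adversarial re-matching) is a reasonable direction, but as you acknowledge, step~(ii) is precisely the missing idea, and nothing in your outline shows why the \(v^{*}\)-constraint pins down the action--state correlation tightly enough under the finer type space. In short: your maximal-payoff argument coincides with the paper's proof; your minimal-payoff discussion identifies a genuine subtlety the paper elides, but your own resolution of it is not yet a proof.
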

\begin{proof}
Regardless of whether \(R\) can delegate or mediate, \(S\)'s strategy induces an arbitrary garbling of his signal; and as transparent motives mean that \(S\) must be indifferent over all on-path messages, she is, therefore, willing to mix (at equilibrium) in a way that produces garbling \(\rho\), where \(\pi = \rho \circ \pi' \). \end{proof}%It should be noted that a stronger result holds: the set of equilibria increases in size as a result of \(S\)'s learning. 
This result is related to the remarkable finding of \cite{lichtig} that \(R\)'s payoff in the truth-leaning equilibrium in a class of games with hard evidence and transparent motives improves as \(S\)'s becomes more informed. This positive relationship (between \(S\)'s information and \(R\)'s welfare) is not generally present in cheap-talk and signaling games; nor must transparently motivated \(S\)'s maximal payoff improve as he becomes more informed.%, as the quasi-concave envelope of a function needn't be convex.

\bibliography{sample.bib}

\end{document}